\newcommand{\citeeg}[1]{\citep[e.g.,][]{#1}}
\newcommand{\abs}[1]{\lvert{}#1\rvert{}}
\DeclareMathOperator*{\argmin}{arg\,min}
\DeclareMathOperator*{\argmax}{arg\,max}
\newcommand{\N}{\mathds{N}}
\providecommand{\R}{\mathds{R}}\renewcommand{\R}{\mathds{R}}
\newcommand{\G}{\mathcal{G}}
\newcommand{\Ind}[1]{\mathds{I}\left[#1\right]}
\newcommand{\fk}[1][k]{f_i^{(#1)}}
\theoremstyle{plain}
\newtheorem{theorem}{Theorem}
\newtheorem{corollary}{Corollary}
\newtheorem{lemma}{Lemma}
\newtheorem{proposition}{Proposition}
\theoremstyle{definition}
\newtheorem{definition}{Definition}
\newtheorem{example}{Example}
\begin{document}

\title{A Formal Separation Between\\Strategic and Nonstrategic Behavior}

\author{
James R.\ Wright\thanks{Department of Computing Science and Alberta Machine Intelligence Institute, University of Alberta. E-mail: james.wright@ualberta.ca}
\and
Kevin Leyton-Brown\thanks{Department of Computer Science, University of British Columbia. E-mail: kevinlb@ubc.ca}
}
\date{}


\maketitle

\begin{abstract}
It is common to make a distinction between ``strategic'' behavior and other forms of intentional but ``nonstrategic'' behavior: typically, that strategic agents respond to beliefs about the behavior of other agents while nonstrategic agents do not. However, a crisp boundary between these concepts has proven elusive. 
This problem is pervasive throughout the game theoretic literature on bounded rationality and particularly critical in parts of the behavioral game theory literature that make an explicit distinction between the behavior of ``nonstrategic'' level-$0$ agents and ``strategic'' higher-level agents (e.g., the level-$k$ and cognitive hierarchy models). Overall, work discussing bounded rationality rarely gives clear guidance on how the rationality of nonstrategic agents must be bounded, instead typically just singling out specific decision rules (e.g., randomizing uniformly, playing toward the best case, optimizing the worst case) and informally asserting that they are nonstrategic.
In this work, we propose a new, formal characterization of nonstrategic behavior. Our main contribution is to show that it satisfies two properties: (1) it is general enough to capture all purportedly ``nonstrategic'' decision rules of which we are aware in the behavioral game theory literature; (2) behavior that obeys our characterization is distinct from strategic behavior in a precise sense. 
\end{abstract}

\textbf{Keywords:} game theory; behavioral game theory; bounded rationality; cognitive models; cognitive hierarchy; level-k

\section{Introduction}

A common assumption in the game theoretic literature is that agents are perfect optimizers who form correct, explicitly probabilistic beliefs and best respond to those beliefs. The behavior of such agents is commonly said to be \emph{strategic}; indeed, in the early days of game theory, the term ``strategic'' was used as a synonym for perfect rationality \citeeg{bernheim84,pearce84}.
%
At the other extreme, we might assume that agents neglect to model other agents at all, following some fixed rule like playing a specific default action or uniformly randomizing across all available actions.
The behavior of such agents is commonly said to be \emph{nonstrategic}.
%
Things get muddier in between these extremes. 
Human players are clearly not perfect optimizers; e.g., nobody knows the Nash equilibrium strategy for chess. 
However, at least some of us surely do reason about the behavior and beliefs of other agents with whom we interact. 
The literature generally also calls such ``boundedly rational'' behavior strategic, even when the behavior is inconsistent with perfect optimization; intuitively, the dividing line is generally taken to be the question of whether agents model other agents and their incentives when deciding how to act.
More formally, the term ``strategic'' is generally used to describe agents who act to maximize their own utilities based on explicit probabilistic beliefs about the actions of other agents \citep{roth02,li03,babaioff04,lee14,gerding11,ghosh12,grabisch17},
and the term ``nonstrategic'' is generally used to describe agents who follow some fixed, known decision rule \citep{sandholm01,airiau03,li03,lee14,gerding11,grabisch17}.

Being able to make a sharp distinction between strategic and nonstrategic behavior matters particularly in the context of  
a prominent family of predictive models from behavioral game theory that describe  \emph{iterative strategic reasoning}, including the level-$k$ \citep{nagel95,costagomes01,crawford10}, cognitive hierarchy \citep{camerer04cognitive}, and quantal cognitive hierarchy models \citep[e.g.,][]{wright17}.
In all of these models, some agents are strategic in a boundedly rational sense, performing a finite number of recursive steps of reasoning about the behavior of other agents, ultimately terminating in reasoning about so-called level-$0$ agents, who are assumed to be nonstrategic.
%
Level-$0$ behavior is frequently defined simply as uniform randomization.
However, the predictive performance of such models can often be substantially improved by allowing for richer level-$0$ specifications \citep{wright-ec,wright-jair}.
For example, one could specify that level-$0$ agents act to maximize the utility of their worst case (maxmin) or of their best case (maxmax).  Because these rules require only the acting agent's utilities as inputs (and not those of any of the other agents), it seems clear that they are nonstrategic.  
%
But just because a proposed level-$0$ behavior can be written without reference to beliefs about other agents' strategies, we cannot conclude that there does not exist another, equivalent way of writing it that does depend on such beliefs.  For example, the maxmax rule just described can also be expressed as a best response to the belief that the opposing agents will play actions that make it possible for the acting agent to achieve their best-case outcome.
%
Things get even worse if one aspires to learn the level-$0$ specification directly from data, effectively optimizing over a space of specifications
\citep{hartford16}: the task now becomes reassuring a skeptic that no point in this space corresponds to behavior that could somehow be rewritten in strategic terms.

This paper defines minimal conditions that we argue must be satisfied by any strategic behavioral model.
We first define several standard solution concepts, and prove that they are strategic according to our definition.
We then leverage this definition to characterize a broad family of ``nonstrategic'' decision rules---called the \emph{elementary} behavioral models---and show that they deserve the name: i.e., that no rule in this class can represent strategic reasoning.
Our proposed characterization is a structural notion: it restricts the information that agents are permitted to use by restricting them to summarize all outcomes into a single number before performing their reasoning.
Finally, we consider the effects of combining elementary models to construct more complicated behavioral models.
Convex combinations of elementary models are also nonstrategic when the elementary models are compatible in a precise sense.
Overall, our results are important because they distinguish strategic from nonstrategic behavioral models via formal mathematical criteria, rather than relying on the intuitive sense that a model ``depends on'' an explicit model of an opponent's behavior. This makes it possible, for example, to introduce a rich, highly parameterized level-$0$ specification into an iterative strategic model while guaranteeing that there is no way of instantiating the level-$0$ specification to produce strategic behavior.

\section{Background}
\label{sec:background}
We begin by briefly defining our formal framework and notation, discussing normal-form games, solution concepts, and behavioral models.

\subsection{Normal-Form Games}

A normal-form game $G$ is defined by a tuple $(N,A,u)$, where
$N=\{1,\ldots,n\}$ is a finite set of \emph{agents}; $A=A_1 \times \ldots \times A_n$ is the set of possible \emph{action profiles}; $A_i$ is the finite set of \emph{actions} available to agent $i$; and $u=\{u_i\}_{i \in N}$ is a set of \emph{utility functions} $u_i:A \to \R$, each of which maps from an action profile to a utility for agent $i$.
Agents may also randomize over their actions. It is standard in the literature to call such randomization a \emph{mixed strategy}; however, for our purposes this terminology will be confusing, since it would lead us to discuss the strategies of nonstrategic agents. We thus instead adopt the somewhat nonstandard terminology \emph{behavior} for this concept. We denote the set of agent~$i$'s possible behaviors by $S_i = \Delta^{|A_i|}$, and the set of possible \emph{behavior profiles} by $S = S_1 \times \ldots \times S_n$, where $\Delta^k$ denotes the standard $k$-simplex (the set $\left\{\theta_0 + \dots +\theta_k ~\bigg|~ \sum_{i=0}^{k} \theta_i=1 \mbox{ and } \theta_i \ge 0 \mbox{ for all } i\right\}$), and hence $\Delta^{|X|}$ is the set of probability distributions over a finite set $X$.
Overloading notation, we represent the expected utility to agent~$i$
of a behavior profile $s \in S$ by $u_i(s)$.
We use the notation $s_{-i}$ to refer to the behavior profile of all agents except $i$, and $(s_i, s_{-i})$ to represent a full behavior profile.

\subsection{Solution Concepts}
\label{sec:solution-concept}

A \emph{solution concept} is a mapping from a game $G$ to a behavior profile (or set of behavior profiles) that satisfies some criteria. We will primarily be concerned with these solution concepts as formalizations of \emph{strategic behavior} in games.

The foundational solution concept in game theory, and the most commonly used, is the Nash equilibrium.
\begin{definition}[Nash equilibrium]
    Let $BR_i(s_{-i}) = \argmax_{a_i \in A_i} u_i(a_i, s_{-i})$
    denote the set of agent~$i$'s \emph{best responses} to a behavior profile $s_{-i} \in S_{-i}$.
    A Nash equilibrium is a behavior profile in which every agent simultaneously best responds to all the other agents.  Formally, $s^* \in S$ is a \emph{Nash equilibrium} if
    $\forall i \in N, a_i \in A_i:\: s_i^*(a_i) > 0 \implies a_i \in BR_i(s_{-i}^*).$
\end{definition}

One important idea is that people become more likely to make errors as the cost of making those errors decreases.
This can be modeled by assuming that agents best respond \emph{quantally}, rather than via strict maximization.
A quantal response plays actions with high expected utility with high probability, and actions with low expected utility with lower probability.
An equilibrium in which agents quantally respond to each other, rather than best responding to each other, is called a quantal response equilibrium \citep{mckelvey95}.
\begin{definition}[Quantal best response]
  A \emph{(logit) quantal best response} $QBR_i(s_{-i}; \lambda,G)$ by agent $i$ to $s_{-i}$ in game $G$ is a behavior $s_i$ such that
  \begin{equation}
    s_i(a_i) =  \frac{\exp[\lambda\cdot u_i(a_i, s_{-i})]}{\sum_{a'_i}\exp[\lambda \cdot u_i(a'_i, s_{-i})]},\label{eq:qbr}
  \end{equation}
  where $\lambda$ (the \emph{precision} parameter) indicates how sensitive agents are to utility differences.
  When $\lambda=0$, quantal best response is equivalent to uniform randomization.
  As $\lambda\rightarrow\infty$,
  quantal best response corresponds to best response in the sense that actions that are not best responses are played with probability that approaches zero; i.e., for all $a_i \notin BR_i(s_{-i})$, it is that case that $\lim_{\lambda\to\infty} QBR_i(s_{-i};\lambda)(a_i) = 0$.
\end{definition}
\begin{definition}[QRE]
  A \emph{quantal response equilibrium} with precision
  $\lambda$ is a behavior profile $s^*$ in which every agent's
  behavior is a quantal best response to the behaviors of the other
  agents; i.e., for all agents $i$,
  $s^*_i = QBR^G_i(s^*_{-i};\lambda).$
\end{definition}
In general, a game can have multiple QREs with a given precision.
However, when using QRE for predictions, it is common to select a particular equilibrium for a given $\lambda$ that lies on a one-dimensional manifold in the joint space of strategies and precisions that starts from the uniform strategy at $\lambda=0$.  Note that, although quantal best response approaches best response as precision goes to infinity, there nevertheless exist Nash equilibria that are not the limits of any sequence of QREs.

\subsection{Models of Agent Behavior}
\label{sec:models}

\newcommand{\AGi}{\ensuremath{A}_{G,i}}

We now turn to what we term \emph{behavioral models}, functions that return a probability distribution over a single agent's action space for every given game.  We will often refer to this distribution informally as the model's ``behavior''.
Unlike a solution concept, which represents a set of joint strategies that are consistent with some criterion, a behavioral model represents a prediction of a single agent's actions.  Profiles of behavioral models can thus be seen as solution concepts that always encode a single product distribution over a given set of individual behaviors.  In a later section we consider what can be said about the profiles of behavioral models induced by existing solution concepts.

Before we can define behavioral models, we must introduce some basic notation. Let:
\begin{itemize}
    \item $\G$ denote the space of all finite normal-form games;
    \item $\R^* = \bigcup_{k=1}^\infty \R^k$ denote the space of all finite vectors; 
    \item $(\R^*)^* = \bigcup_{k=1}^\infty (\R^*)^k$ denote the space of all finite-sized, finite-dimensional tensors; 
    \item $\Delta^* = \bigcup_{k=1}^\infty \Delta^k$ denote the space of all finite standard simplices; and 
    \item $\AGi$ denote player $i$'s action space in game $G$.
\end{itemize}

\begin{definition}[Behavioral models]
    A \emph{behavioral model} is a function $f_i:\G \to \Delta^*$; $f_i(G) \in \Delta^{|\AGi|}$ for all games $G$.
    We use a function name with no agent subscript, such as $f$, to denote a \emph{profile of behavioral models} with one function $f_i$ for each agent.
    We write $f(G)$ to denote the behavior profile that results from applying each $f_i$ to $G$.
\end{definition}

Much work in behavioral game theory proposes behavioral models rather than solution concepts (though the formal definition of a behavioral model is our own).\footnote{\label{fn:no-correlation}
Since a behavioral model returns a distribution over only a single agent's actions, a profile of behavioral models cannot represent correlations between agents behaviors; e.g., no non-Nash correlated equilibrium can be represented by a profile of behavioral models.  In this work we focus on understanding and predicting how individual agents reason about the other agents; we leave the modeling of correlations between randomized actions of agents to future work.}

One key idea from that literature is that humans can only perform a limited number of steps of strategic reasoning, or equivalently that they only reason about higher-order beliefs up to some fixed, maximum order.  

We begin with the so-called \emph{level-$k$} model \citep{nagel95,costagomes01}. Unlike Nash equilibrium and quantal response equilibrium, both of which describe fixed points, the level-$k$ model is computed via a finite number of best response calculations. Each agent $i$ is associated with a level $k_i \in \N$, corresponding to the number of steps of reasoning the agent is able to perform.  A level-$0$ agent plays nonstrategically (i.e., without reasoning about its opponent); a level-$k$ agent (for $k\ge1$) best responds to the belief that all other agents are level-$(k-1)$.
The level-$k$ model implies a distribution over play for all agents when combined with a distribution $D \in \Delta^*$ over levels.
\begin{definition}[Level-$k$ prediction]
    Fix a distribution $D \in \Delta^*$ over levels and a level-$0$ behavior $s^0 \in S$.
    Then the \emph{level-$k$ behavior} for an agent $i$ is defined as
    \[s^k_i(a_i) \propto \Ind{a_i \in BR_i(s^{k-1}_{-i})},
    \]
    where $\Ind{\cdot}$ is the indicator function that returns 1 when its argument is true and 0 otherwise.
    The \emph{level-$k$ prediction} $\pi^{Lk} \in S$ for a game $G$ is the average of the behavior of the level-$k$ strategies weighted by the frequency of the levels,
    $\pi^{Lk}_i(a_i) = \sum_{k=0}^\infty D(k)s^k_i(a_i).$
\end{definition}

Cognitive hierarchy \citep{camerer04cognitive} is a very similar model in which agents respond to the distribution of lower-level agents, rather than believing that every agent performs exactly one step less of reasoning.
\begin{definition}[Cognitive hierarchy prediction]
    Fix a distribution $D \in \Delta^*$ over levels and a level-$0$ behavior $s^0 \in S$.
    Then the \emph{level-$k$ hierarchical behavior} for an agent $i$ is
    \[\pi^k_i(a_i) \propto \Ind{a_i \in BR_i(\pi^{0:k-1}_{-i})},
    \]
    where $\pi^0 = s^0$ and $\pi^{0:k-1}_i(a_i) = \sum_{m=0}^{k-1} D(m)\pi^m_i(a_i)$.
    The \emph{cognitive hierarchy prediction} is again the average of the level-$k$ hierarchical strategies weighted by the frequencies of the levels,
    $\pi^{CH}_i(a_i) = \sum_{k=0}^\infty D(k)\pi^k_i(a_i).$
\end{definition}

As we did with quantal response equilibrium, it is possible to generalize these iterative solution concepts by basing agents' behavior on quantal best responses 
rather than best responses.  The resulting models are called \emph{quantal level-$k$} and \emph{quantal cognitive hierarchy} (e.g., \citealp{stahl94}; \citealp{wright17}).

All of the iterative solution concepts described above rely on the specification of a nonstrategic level-$0$ behavior.
This need is a crucial motivation for the current paper, in which we explore what behaviors can be candidates for this specification.

\section{Strategic Behavioral Models}
\label{sec:strategic-models}

As discussed in the introduction, there is general qualitative agreement in the literature that strategic agents act to maximize their own utilities based on explicit probabilistic beliefs about the actions of other agents. Our ultimate goal is to characterize behavioral models that are unambiguously nonstrategic; thus, to strengthen our results, we adopt a somewhat more expansive notion of strategic behavior. Specifically, we define an agent as \emph{minimally strategic} if they satisfy two conditions, which we call (1)~other responsiveness and (2)~dominance responsiveness. These conditions require that the agent chooses actions both (1)~with at least some dependence on others' payoffs; and (2)~with at least some concern for their own payoffs. 

The key feature of strategic agents is that they take account of the incentives of other agents when choosing their own actions.  To capture this intuition via the weakest possible necessary condition, we say that an agent is other responsive if their behavior is ever influenced by changes (only) to the utilities of other agents.

\begin{definition}[Other responsiveness]
    A behavioral model $f_i$ is \emph{other responsive} if there exists a pair of games $G=(N,A,u)$ and $G'=(N,A,u')$ such that $u_i(a) = u'_i(a)$ for all $a \in A$, but $f_i(G) \ne f_i(G')$.
\end{definition}

It is also traditional to assume that agents always act to maximize their expected utilities. This assumption is too strong for our purposes; for example, we want to allow for deviations from perfect utility maximization such as quantal best response. However, it does not seem reasonable to call an agent strategic if they pay no attention whatsoever to their own payoffs.
%
A natural analogue to other responsiveness might seem to be \emph{payoff responsiveness} \citep[as in][]{goeree2005regular}, where agents are assumed to play higher expected-utility actions with higher probability.  Computing the expected utility of a player's action requires a belief over the actions of the other players, so we cannot use this notion without fixing the specific beliefs of the agent.
We thus introduce a concept that we call \emph{dominance responsiveness},
a sense in which an agent might show concern for their own payoffs that is weaker in two ways.
First, it requires that agents with an action that has higher expected utility \emph{regardless of the actions of the other players} play that action with higher probability than other actions.  Second, it only requires this when the action's utility is \emph{sufficiently} higher; agents are not required to respond to arbitrarily small differences in utility.

\begin{definition}[Dominance responsiveness]
    \label{def:dominance-responsive}
    Fix a game $G=(N,A,u)$ in which action $a_i^+ \in A_i$ is strictly dominant; that is,
    $u_i(a_i^+, a_{-i}) > u_i(a_i, a_{-i})$ for all $a_{-i} \in A_{-i}$ and $a_i \ne a_i^+$.
    A behavioral model $f_i$ respects dominance in $G$ if $f_i(G)(a_i^+) > f_i(G)(a_i)$ for all $a_i \ne a_i^+$.
    We say that $f_i$ is \emph{$\zeta$-dominance responsive} if it respects dominance in all games in which 
    $u_i(a_i^+, a_{-i}) > u_i(a_i, a_{-i}) + \zeta$ for all $a_i \ne a_i^+$ and $a_{-i} \in A_{-i}$.
    A behavioral model is \emph{dominance responsive} if it is $\zeta$-dominance responsive for some $\zeta > 0$.
\end{definition}

Other responsiveness requires only that a behavioral model sometimes change its behavior in response to changes in the other players' utilities.  However, dominance responsiveness requires its behavior to change \emph{whenever} the player's own utilities satisfy the condition that one action is sufficiently dominant.
We impose this stronger requirement as a way to strike a balance between two objectives.
On the one hand, we want to require that a strategic model in some way aims to achieve high utility outcomes.
However, we do not want to assume a specific mechanism for seeking higher utilities;
in particular, we do not want to require optimization with respect to probabilistic beliefs, and we do not want to require that a player ``notice'' arbitrarily small changes in its own utilities.

Behavioral models that are both other responsive and dominance responsive satisfy a very low bar for strategic behavior.  That means that models that fail to satisfy either one or both conditions are unambiguously nonstrategic; we refer to such models as \emph{strongly nonstrategic}.

\begin{definition}[Strongly nonstrategic behavioral model]
    A behavioral model $f_i$ is \emph{minimally strategic} if it is both other responsive and dominance responsive.
    Conversely, a behavioral model that is not minimally strategic is \emph{strongly nonstrategic}.
\end{definition}


\section{Existing Solution Concepts are Minimally Strategic}
\label{sec:analysis-existing}

We now demonstrate that our definition of minimally strategic behavioral models does more than describe qualitative patterns of behavior that have been called ``strategic'' in the past: it also formally captures the predictions of various solution concepts both from classical game theory and from behavioral game theory (Nash equilibrium, quantal response equilibrium, level-$k$, cognitive hierarchy, and quantal cognitive hierarchy).
 
We first show that any model defined as either a best response or a quantal best response to a profile of dominance responsive behavioral models is itself dominance responsive.

\begin{lemma}
    \label{lem:qbr-DR}
    For any $\lambda>0$ and profile of behavioral models $f_{-i}$, the behavioral model
    $q_i(G) = QBR_i(f_{-i}(G); \lambda, G)$   
    is dominance responsive.
\end{lemma}
\begin{proof}
    Fix $G=(N,A,u)$ with some strictly dominant action $a^+_i \in A_i$.
    For any $a_i \in A_i$,
    $q_i(G)(a_i) \propto \exp[u_i(a_i^+, f_{-i}(G))]$.
    Since $u_i(a_i^+, f_{-i}(G)) > u_i(a_i, f_{-i}(G))$ for any $a_i \ne a_i^+$, we have $q_i(a_i^+) > q_i(a_i)$.
    That is, $q_i$ always plays a strictly dominant action with higher probability than any other action, regardless of the difference in utilities.  Thus, $q_i$ is $\zeta$-dominance responsive for any $\zeta>0$.
\end{proof}

\begin{lemma}
    \label{lem:br-DR}
    For any profile of behavioral models $f_{-i}$, any behavioral model $b_i$ that satisfies
    \[ a_i \notin BR_i(f_{-i}(G)) \implies b_i(G)(a_i) = 0 \]
    is dominance responsive.
\end{lemma}
\begin{proof}
    For any action $a^+_i \in A_i$, if $u_i(a^+_i, a_{-i}) > u_i(a_i,a_{-i})$ for all $a_i\ne a^+_i\in A_i$ and $a_{-i} \in A_{-i}$, then by assumption $b_i(G)(a_i)=0$ for all $a_i \ne a^+_i$, and hence $b_i(G)(a^+_i)=1 > 0$.  Thus, $b_i$ is also $\zeta$-dominance responsive for any $\zeta > 0$.
\end{proof}

We now show that quantal best response to any profile of dominance responsive behavioral models is minimally strategic.
From this result, it will immediately follow that Nash equilibrium, quantal response equilibrium, level-$k$, cognitive hierarchy, and quantal cognitive hierarchy are minimally strategic.

\begin{lemma}
    \label{lem:strategic-QBR}
    For any $\lambda>0$ and dominance responsive profile of behavioral models $f_{-i}$, the behavioral models $q_i$ and $b_i$ are both minimally strategic, where
    \[q_i(G) = QBR_i(f_{-i}(G); \lambda, G)\]
    and $b_i$ satisfies
    \[ a_i \notin BR_i(f_{-i}(G)) \implies b_i(G)(a_i) = 0. \]
    \begin{proof}
    Both models are dominance responsive by Lemmas~\ref{lem:qbr-DR} and~\ref{lem:br-DR} respectively.
    It remains only to show that both models are other responsive.

    Consider the following 2-player games, in which the opponent $j$ is $\zeta$-dominance responsive.
    \begin{center}
            \begin{game}{3}{2}[$G_1$]
                 \> $L$ \> $R$ \\
             $U$ \> $1,1+\zeta$  \> $0,0$ \\
             $D$ \> $0,1+\zeta$  \> $1,0$
            \end{game}%
            \hspace{1cm}%
            \begin{game}{3}{2}[$G_2$]
                 \> $L$ \> $R$ \\
             $U$ \> $1,0$  \> $0,1+\zeta$ \\
             $D$ \> $0,0$  \> $1,1+\zeta$
            \end{game}%
        \end{center}

        Let $i$ be the row player and $j$ be the column player.
        Since $f_j$ is dominance responsive, $f_j(G_1)(L) > f_j(G_1)(R)$, and therefore $q_i(G_1)(U) > q_i(G_1)(D)$.
        By the same argument, $f_j(G_2)(R) > f_j(G_2)(L)$, so $q_i(G_2)(D) > q_i(G_2)(U)$, and hence
        $q_i(G_1) \ne q_i(G_2)$.  But $G_1$ and $G_2$ differ only in $j$'s payoffs, so $q_i$ is other responsive.

        Similarly, $b_i(G_1)(U) > b_i(G_1)(D) = 0$, whereas $0 = b_i(G_2)(U) < b_i(G_2)(D)$, so $b_i$ is also other responsive.
    \end{proof}
\end{lemma}

\begin{theorem}
    \label{thm:fixed-point}
    Both QRE and Nash equilibrium are (profiles of) minimally strategic behavioral models.
\begin{proof}
    QRE is immediate from Lemma~\ref{lem:strategic-QBR} by letting
    $f_i(G) = QBR_i(f_{-i}(G);\lambda_i,G)$ for some set $\{\lambda_i>0\}_{i=1}^N$.
    Nash equilibrium is immediate from Lemma~\ref{lem:strategic-QBR} by letting $f$ return an arbitrary Nash equilibrium.
\end{proof}
\end{theorem}

\paragraph{Set-valued solution concepts.}
Both Nash equilibrium and QRE are set-valued functions that return sets of strategy profiles rather than a single behavior profile.  Theorem~\ref{thm:fixed-point} proves that any profile of behavioral models that maps each game to a unique Nash equilibrium (or QRE) is a profile of minimally strategic behavioral models.
Since the specific equilibrium mapped to can be arbitrary, this means that any behavioral model that is guaranteed to predict a Nash equilibrium or QRE for is minimally strategic.

\begin{theorem}
    \label{thm:no-elementary-iterative}
    All of level-$k$, cognitive hierarchy, and quantal cognitive hierarchy are minimally strategic behavioral models for agents of level 2 and higher.  Level~$1$ is minimally strategic or not depending on level~$0$.
\end{theorem}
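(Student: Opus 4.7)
My plan is to prove the theorem by separately verifying the two conditions in the definition of strategic behavioral model for each of the three iterative models. The argument splits cleanly into a dominance-responsiveness part (essentially uniform across levels and models) and an other-responsiveness part (which is the substantive content and varies with the level).

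Dominance responsiveness at any level $k \geq 1$ is essentially immediate. For each of level-$k$, cognitive hierarchy, and quantal cognitive hierarchy, a level-$k$ agent's behavior is a (quantal) best response to some mixture over the lower-level behaviors of the opponents. If $a_i^*$ is strictly dominant in $G$, then against every such mixture $a_i^*$ yields strictly higher expected utility than every alternative, so the level-$k$ and cognitive-hierarchy rules place probability one on $a_i^*$, while quantal cognitive hierarchy assigns strictly more mass to $a_i^*$ than to any other action. In the dominance-reversed game $G'$, some $a_i'$ strictly dominates $a_i^*$, and the same argument in reverse shows $a_i^*$ is no longer a best response and gets strictly less mass than $a_i'$. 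Hence the two behaviors disagree. This step parallels the reasoning already used in the proof of Theorem~\ref{thm:no-elementary-fixed-point}.

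For other responsiveness my plan is to construct an explicit two-player example in which $j$'s level-$(k-1)$ behavior can be steered by changing only $u_j$, and $i$'s level-$k$ response to this change is nontrivial. I would take a matching-pennies-style payoff matrix for $i$, say $u_i(U,L)=u_i(D,R)=1$ and $u_i(U,R)=u_i(D,L)=0$, so that $i$'s best response at level $k$ swings sharply depending on what $j$ does at level $k-1$. Then I pick $u_j$ versus $u'_j$ that force $j$'s level-$(k-1)$ behavior to concentrate on $L$ in $G$ and on $R$ in $G'$; when $k-1 \geq 1$ this is easy to arrange by simply making $L$ strictly dominant for $j$ in $G$ and $R$ strictly dominant for $j$ in $G'$, invoking the dominance-responsiveness step already established. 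The level-$k$, CH, or QCH behavior of $i$ then shifts between $U$ and $D$, giving $f_i(G) \neq f_i(G')$ with $u_i = u'_i$, which is other responsiveness.

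The main obstacle is the level-$1$ case: an agent who best responds only to $s^0_{-i}$ inherits dependence on $u_{-i}$ only through $s^0$, so if $s^0$ is literally independent of $u_{-i}$ (as with uniform randomization) then level-$1$ fails other responsiveness. I expect to handle this by arguing that under any $s^0$ whose specification itself responds to $u_{-i}$ in some way (the paper's larger interest is precisely in such richer level-$0$ rules), the same $L$-vs-$R$ construction can be carried out directly at the level-$0$ stage; and from level $2$ upward the argument goes through irrespective of $s^0$ because the $BR_j$ (or $QBR_j$) computation at the intermediate stage already injects $u_j$-dependence. The phrasing ``level $1$ or level $2$ (and higher)'' in the statement seems to be hinting at exactly this split between the two cases, with the level-$1$ case being the delicate one to verify and the level-$2$-and-up case being uniform across all three models since they differ only in which mixture of lower-level behaviors is fed into the same recursive (quantal) best-response machinery.
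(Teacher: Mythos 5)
Your overall skeleton (dominance responsiveness is essentially immediate for any (quantal) best responder; other responsiveness comes from a $2\times 2$ matching-pennies-style construction in which the opponent's lower-level behavior is shifted by changing only the opponent's payoffs; and the ``level $1$ or level $2$'' phrasing marks a genuine case split) matches the paper's strategy, which routes everything through the two parts of the proof of Theorem~\ref{thm:no-elementary-fixed-point}. But your handling of the level-$1$ branch has a genuine gap. You propose to cover level $1$ by assuming $s^0$ ``responds to $u_{-i}$ in some way'' and then claiming ``the same $L$-vs-$R$ construction can be carried out directly at the level-$0$ stage.'' That construction steers the opponent by making $L$ (resp.\ $R$) strictly dominant, i.e.\ it requires the lower level to be \emph{dominance responsive}; this is true of levels $\geq 1$ (they are (quantal) best responses) but is exactly what you cannot assume of a level-$0$ rule. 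Indeed, the level-$0$ specifications the paper cares about (maxmax, max welfare, fair, etc.) depend on others' utilities without respecting dominance, so making an action dominant for $j$ gives you no control over $s^0_j$. The paper instead isolates the precise property needed --- \emph{self responsiveness} (Definition~\ref{def:self-responsive}): for every game there is a change to only $j$'s own payoffs that changes $f^0_j$'s output somewhere --- and shows (Part 2 of Theorem~\ref{thm:no-elementary-fixed-point}) that \emph{any} such change suffices in the matching-pennies game, because $u_i(U,f_j(G)) = f_j(G)(L)$ and quantal response with $\lambda>0$ is strictly monotone in this quantity. The theorem's disjunction is then: if $f^0$ is self responsive, level $1$ is strategic; otherwise one uses Part 1 of Theorem~\ref{thm:no-elementary-fixed-point} (make the least-probable action strictly dominant) to conclude that level-$1$ behaviors, being quantal best responses, are \emph{always} self responsive, so level $2$ and above are strategic regardless of $s^0$. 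Your proposal does not supply this second observation, which is what makes the level-$2$ claim unconditional.

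A secondary, fixable issue: for cognitive hierarchy with exact (non-quantal) best response, a level-$k$ agent responds to the \emph{mixture} $\pi^{0:k-1}_{-i}$, which retains the unchanged level-$0$ component. Your dominance steering shifts only the levels $\geq 1$ in that mixture, so in matching pennies the aggregate probability of $L$ need not cross $i$'s indifference point $1/2$ (e.g.\ if $D(0)$ is large and $s^0_j$ is biased toward $L$), and then $i$'s pure best response --- hence its behavior --- does not change at all. You would need either to tailor $u_i$ so the indifference threshold lies between the two induced beliefs, or to work with quantal response (where any change in the belief strictly changes the response) as the paper does, treating exact best response as the limiting case.
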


\begin{proof}
    Let $f^{0}$ be a profile of behavioral models and $\lambda_i^{1},\lambda_i^{2}>0$ for all $i \in N$.
    To prove the result for quantal cognitive hierarchy, choose behavioral model profiles $f^{1}, f^{2}$ satisfying
    $f_i^{1}(G) = QBR_i(f_{-i}^{0}(G);\lambda_i^{1},G)$ and
    $f_i^{2}(G) = QBR_i(f_{-i}^{1}(G);\lambda_i^{2},G)$
    for all games $G$ and players $i$.

    If all of the behavioral models in $f^0$ are dominance responsive, then by the argument in the proof of Lemma~\ref{lem:strategic-QBR}, all of the behavioral models in $f^1$ are minimally strategic.

    Otherwise, all of the behavioral models in $f^1$ are dominance responsive by Lemma~\ref{lem:qbr-zeta}, and thus by the argument of the proof of Lemma~\ref{lem:strategic-QBR}, all of the behavioral models in $f^2$ are minimally strategic.

    To prove the result for level-$k$ and cognitive hierarchy, instead choose
    \[f_i^{1}(G)(a_i) = \frac{\Ind{a_i \in BR_i(f_{-i}^{0}(G))}}{\sum_{a'_i \in A_i}\Ind{a'_i \in BR_i(f_{-i}^{0}(G))}}
    \text{ and }
    f_i^{2}(G)(a_i) = \frac{\Ind{a_i \in BR_i(f_{-i}^{1}(G))}}{\sum_{a'_i \in A_i}\Ind{a'_i \in BR_i(f_{-i}^{1}(G))}},\]
    and follow an identical argument.
\end{proof}

For example, when level-$0$ is uniform, level-$1$ is strongly nonstrategic, because no change to $j$'s payoffs can change $i$'s level-$1$ behavior.  But when level-$0$ is maxmax, level-$1$ is minimally strategic.

\section{Elementary Behavioral Models}
\label{sec:elementary}

Our main task in this paper is to separate nonstrategic behavior from strategic behavior. Now that we have formally defined the latter, we can introduce a class of behavioral models, called \emph{elementary} models, that we will ultimately show are always strongly nonstrategic. 
Observe that an agent reasoning strategically needs to account both for its own payoffs (in order to be dominance responsive) and for others' payoffs (in order to be other responsive); thus, it must evaluate each outcome in multiple terms. Our key idea is thus to require that nonstrategic behavior independently ``scores'' each outcome using a single number. 
In this section, we formalize such a notion and illustrate its generality via examples of how it can be used to encode previously proposed ``nonstrategic'' behaviors. 

\subsection{Defining Elementary Behavioral Models}

The formal definition of elementary behavioral models is unfortunately more complex than the intuition we just gave. The reason is that any tuple of $k$ real values can be encoded into a single real number; in information economics this is referred to as \emph{dimension smuggling} \citeeg{nisan06}. Without ruling out dimension smuggling, therefore, a restriction that nonstrategic agents rely on only a single number would lack any force.
We thus restrict the class of functions that an elementary model can use to those that are either \emph{dictatorial} or \emph{non-encoding}; that its, functions that either ignore all inputs but one, or combine their inputs in a way that makes it impossible to determine what any single input must have been.

\begin{definition}[Dictatorial function]
A function $\varphi:\R^m \to \R^n$ is \emph{dictatorial} if its value is completely determined by a single input: $\exists i \in \{1, \ldots, m\}$ such that $\forall x,x' \in \R^{m-1}$, $\forall c \in \R$, $\varphi(x_1, \ldots, x_{i-1}, c, x_{i}, \ldots, x_{m-1}) = \varphi(x'_1, \ldots, x'_{i-1}, c, x'_{i}, \ldots, x'_{m-1})$.
\end{definition}

This class of functions takes its name from the social choice condition from which it is inspired; one input to the function $\varphi$ acts as a dictator over $\varphi$'s output.

\begin{definition}[Non-encoding function]
    A function $\varphi:\R^N \to \R$ for $N\ge2$ is \emph{non-encoding} iff for every $1\le i \le N$ and $b>0$, there exist $x,x' \in \R^N$ such that
    \begin{enumerate}
        \item $\varphi(x) = \varphi(x')$, and
        \item $\abs{x_i - x'_i} > b$.
    \end{enumerate}
\end{definition}

The reason we want such a condition is to restrict our attention to functions that combine utility values for multiple players into a single value in a meaningfully nonreversible way.
(Simple examples include summing the values, taking their max, taking the first value if it is greater than some constant and otherwise taking the second, taking a convex combination of different values, etc.)

Our condition is stronger than simply requiring the function to be non-invertible.
For a function to be non-invertible, it is sufficient that there exist a single pair of inputs in the domain that map to the same value in the range.  In contrast, we require that there be infinitely many such pairs; furthermore, there must exist pairs that are arbitrarily far apart in a specific dimension that map to the same value.
In the context of utilities that have been summarized by some non-encoding function, this means that there is no way to reliably recover the utility for a specific player based on the summary, to any degree of approximation, unless the output is always computed using \emph{only} that dimension.

For example, the linear combination $\psi^{(1)}(x) = x_1 + x_2$ is non-encoding; knowing that $\psi^{(1)}(x)=7$ gives no information about what value $x_1$ must be, nor even a non-trivial neighborhood that must contain $x_1$.
In contrast, the function
$\psi^{(2)}(x) = 10\lfloor x_1 \rfloor + \exp(x_2)/(1 + \exp(x_2))$
does not satisfy non-encoding, even though each output is mapped to by infinitely many inputs, because given $\psi^{(2)}(x)$, it is possible to approximate the value of $x_1$ to within $\pm 1$, and it is possible to recover $x_2$ exactly.

We are now ready to formally define elementary behavioral models.
Intuitively, an elementary behavioral model is one which first summarizes the outcome of each action profile as a single number computed only from the profile of utilities it induces.
The model then computes its behavior based only on these ``potentials''.

\begin{definition}[Elementary behavioral model]
    A behavioral model $f_i:\G \to \Delta^*$ is \emph{elementary} if it can be represented as $h(\Phi(G))$,
    where
    \begin{enumerate}
        \item \label{it:Phi} $\Phi(G)$ maps a game $G=(N,A,u)$ to a vector with one entry containing $\varphi(u(a))$ for each action profile $a \in A$, and
        \item $\varphi$ is either dictatorial or non-encoding, and
        \item $h: (\R^*)^* \to \Delta^*$ is an arbitrary function; we use it to map from $\R^A$ to $\Delta^{|A_i|}$.
    \end{enumerate}
    For convenience, when condition~\ref{it:Phi} holds we refer to $\Phi$ as the \emph{potential map} for $\varphi$.
\end{definition}

An elementary behavioral model works as follows. First, given an arbitrary game $G=(N,A,u)$, and for each action profile $a\in A$, we apply the same (either dictatorial or non-encoding) function $\varphi$ to each element of the $|N|$-tuple of real values $\langle u_1(a), \ldots, u_{|N|}(a)\rangle$, producing in each case a single real value. We represent all of these real values in a mapping we call $\Phi$; this \emph{potential map} is a function of the same size as each of the utility functions. We then apply an arbitrary function $h$ to $\Phi(G)$, producing a probability distribution over $A_i$. 


\subsection{Examples of Elementary Behavioral Models}
\label{sec:elementary-examples}

To demonstrate the generality of elementary behavioral models, we show how to encode each of the candidate level-$0$ behavioral models that we proposed in our past work
\citep{wright-ec}. (Thus, although that work only appealed to intuition, we can now conclude that these behavioral models are indeed all strongly nonstrategic.)

We begin with the simplest behavioral models: those that depend only on a given agent $i$'s utilities $u_i$.

\begin{example}[Maxmax behavioral model]\label{eg:maxmax}
A maxmax action for agent $i$ is an action whose best case-utility for $i$ is greater than the best-case utility of any of $i$'s other actions.
An agent that wishes to maximize their possible payoff will play a maxmax action. 
The \textit{maxmax behavioral model} $f_i^{\text{maxmax}}(G)$ uniformly randomizes over all of $i$'s maxmax actions in $G$: 
$f_i^{\text{maxmax}}(G)(a_i) \propto \Ind{a_i \in \argmax_{a'_i \in A_i} \max_{a_{-i} \in A_{-i}} u_i(a'_i, a_{-i})}.$
\end{example}

\begin{example}[Maxmin behavioral model]
A maxmin action for agent $i$ is the action with the best worst-case guarantee.  
This is the safest action to play against hostile agents.
The \textit{maxmin behavioral model} $f_i^{\text{maxmin}}(G)$ uniformly randomizes over all of $i$'s maxmin actions in $G$:
$f_i^{\text{maxmin}}(G)(a_i) \propto \Ind{a_i \in \argmax_{a'_i \in A_i} \min_{a_{-i} \in A_{-i}} u_i(a'_i, a_{-i})}.$
\end{example}

\begin{example}[Minimax regret behavioral model]\label{eg:mmr}
Following \citet{savage51}, for each action profile, an agent has a possible \emph{regret}: how much more utility could the agent have gained by playing the best response to the other agents' actions?  Each of the agent's actions is therefore associated with a vector of possible regrets, one for each possible profile of the other agents' actions.  A minimax regret action is an action whose maximum regret (in the vector of possible regrets) is minimal.  
The \textit{minimax regret behavioral model} $f_i^{\text{mmr}}(G)$ uniformly randomizes over all of $i$'s minimax regret actions in $G$. That is, if
\[r(a_i, a_{-i}) = u_i(a_i, a_{-i}) - \max_{a^*_i \in A_i} u_i(a^*_i, a_{-i})\]
is the regret of agent $i$ in action profile $(a_i, a_{-i})$, then
\[f_i^{\text{mmr}}(G)(a_i) \propto \Ind{a_i \in \argmin_{a'_i \in A_i} \max_{a_{-i} \in A_{-i}} r(a_i, a_{-i})}.\]
\end{example}

Because each of the maxmax, maxmin, and minimax regret behavioral models depends only on agent $i$'s payoffs, we can set $\varphi(u(a)) = u_i(a)$ in each case; this $\varphi$ is dictatorial.  The encodings differ only in their choice of $h$. These vary in their complexity (e.g., maxmax simply uniformly randomizes over all actions that tie for corresponding to the largest potential value; for minimax regret it is necessary to compute a best response for each action profile). However, recall that $h$ is an arbitrary function that encodes any function of the potentials.

Other behavioral models depend on both agents' utilities, and so require different $\varphi$ functions.

\begin{example}[Max welfare behavioral model]\label{eg:maxwelfare}
An \emph{max welfare} action is part of some action profile that maximizes the sum of agents' utilities. The \textit{max welfare behavioral model} $f_i^{\text{W}}(G)$ uniformly randomizes over max welfare actions in $G$: 
\[f_i^{\text{W}}(G)(a_i) \propto \Ind{a_i \in \argmax_{a'_i \in A_i} \max_{a_{-i} \in A_{-i}} \sum_{j \in N} u_j(a'_i, a_{-i})}.
\]
\end{example}

We can encode the efficient behavioral model as elementary by setting $\varphi(u(a)) = \sum_j u_j(a)$.
This $\varphi$ satisfies non-encoding; in fact, all linear combinations do.
We then define $h$ to uniformly randomize over all actions whose maximum potential value (across actions of the other player) is maximal (compared to $i$'s other actions).

\begin{proposition}
    Any linear function $\varphi(x) = w_0 + \sum_{j=1}^N w_jx_j$ is either dictatorial or non-encoding.
\end{proposition}
\begin{proof}
    If there are zero or one weights $w_j \ne 0$ with $1 \le j \ne N$, then the function is dictatorial and we are done.
    Otherwise, fix arbitrary $x \in \R^N$, $b,\epsilon>0$, and $1\le i,j \le N$ with $w_j \ne 0$ and $j \ne i$.
    Construct $x'\in\R^N$ by setting $x'_i = x_i + (1+\epsilon)b$, setting $x'_j = x_j - \frac{w_i}{w_j}(1+\epsilon)b$, and setting $x'_l = x_l$ for all $l \ne i$ with $1 \le l \le N$.
    Observe that $\varphi(x)=\varphi(x')$ and $\abs{x_i-x'_i} > b$, as required.
\end{proof}

\begin{example}[Fair behavioral model]
Let the unfairness of an action profile be the difference between the maximum and minimum payoffs among the agents under that action profile: $d(a) = \max_{i,j \in N} u_i(a) - u_j(a).$

Then a ``fair'' outcome  minimizes this difference in utilities.  A \emph{fair} action is part of a minimally unfair action profile. The \textit{fair behavioral model} $f_i^{\text{fair}}(G)$ uniformly randomizes over fair actions:
$f_i^{\text{fair}}(G)(a_i) \propto \Ind{a_i \in \argmin_{a'_i \in A_i} \min_{a_{-i} \in A_{-i}} d(a'_i, a_{-i})}.
$
\end{example}

We can encode the fair behavioral model as elementary by setting $\varphi(u(a)) = \max_{j,k} (u_j(a)-u_k(a))$; it is again straightforward to demonstrate that this potential function is non-encoding.
We then define $h$ to uniformly randomize over all actions whose minimum potential value is minimal.

\begin{proposition}
    Let $\varphi:\R^N\to\R$ for some $N \ge 2$ be defined by $\varphi(x) = \max_{1 \le j,k\le N}(x_j - x_k)$.
    Then $\varphi$ is non-encoding.
\end{proposition}
\begin{proof}
    Fix arbitrary $x \in \R^N$, $b,\epsilon>0$.
    Let $x'_i = x_i + (1+\epsilon)b$ for every $1 \le i \le N$.
    Clearly $\abs{x_i-x'_i} > b$ for every $1 \le i \le N$.
    Since $x'_j - x'_i = x_j - x_i$ for every $1 \le i,j \le N$, we also have
    $\varphi(x) = \varphi(x')$.
\end{proof}

Finally, we note that all of the examples just given are binary: actions are either fair/maxmin/etc.\ or they are not. By changing only $h$, we could similarly construct continuous variants of each concept in which, e.g., actions that achieve nearly maximal potentials are played nearly as often by the behavioral model.


\section{Elementary Behavioral Models are Strongly Nonstrategic}
\label{sec:characterize}

We are now ready to show that elementary behavioral models are always strongly nonstrategic. 
This result is important because it achieves our key goal of distinguishing strategic from nonstrategic behavioral models via a formal mathematical criterion, rather than relying on the intuitive sense that a model ``depends on'' an explicit model of an opponent's behavior. In fact, we do a bit better than simply showing that elementary models are nonstrategic: we show that the space of dominance responsive behavioral models is exactly partitioned into elementary models and minimally strategic models.

\begin{theorem}
    \label{thm:elementary-nonstrategic}
    Every elementary behavioral model is strongly nonstrategic.
\end{theorem}
\begin{proof}
    Suppose for contradiction that elementary behavioral model $f_i(G) = h_i(\Phi(G))$ is minimally strategic, where $\Phi$ is the potential map for $\varphi$.
    By the definition of elementary behavioral models, $\varphi$ is either dictatorial or non-encoding.
    \begin{enumerate}
        \item\label{case:i-dict} \emph{$i$ is a dictator for $\varphi$.}
        Because $f_i$ is other responsive, there exist $G=(N,A,u)$ and $G'=(N,A,u')$ with $u_i(a) = u'_i(a)$ for all $a \in A$, such that $f_i(G) \ne f_i(G')$.   But since $i$ is a dictator for $\varphi$, $\Phi(G) = \Phi(G')$, and hence $f_i(G) = f_i(G')$, a contradiction.

        \item\label{case:not-dict} \emph{$\varphi$ is non-encoding.}
        Since $f_i$ is assumed to be minimally strategic, it must be $\zeta$-dominance responsive for some $\zeta$.
        Let $x,x' \in \R^2$ be two vectors such that $x'_i - x_i > \zeta$.
        These vectors are guaranteed to exist by the definition of non-encoding.
        We use these utility vectors to construct $2$-player games $G_3$ and $G_4$:
        \begin{center}
            \begin{game}{3}{2}[$G_3$]
                 \> $L$ \> $R$ \\
             $U$ \> $x'$  \> $x'$ \\
             $D$ \> $x$  \> $x$
            \end{game}%
            \hspace{1cm}%
            \begin{game}{3}{2}[$G_4$]
                 \> $L$ \> $R$ \\
             $U$ \> $x$  \> $x$ \\
             $D$ \> $x'$  \> $x'$
            \end{game}%
        \end{center}
        Note that $i$'s utility in $G_3$ for playing $U$ exceeds that of playing $D$ by $\zeta$, and thus by $\zeta$-dominance responsiveness $f_i(G_3)(U) > f_i(G_3)(L)$.
        By the same argument, $f_i(G_4)(D) > f_i(G_4)(U)$.  Thus $f_i(G_3) \ne f_i(G_4)$.
        But since $\varphi(x)=\varphi(x')$, and since $x$ and $x'$ are the only payoff tuples that occur in either $G_3$ or $G_4$, $\Phi(G_3) = \Phi(G_4)$ and hence $f_i(G_3) = f_i(G_4)$, a contradiction.\qedhere

        \item\label{case:j-dict} \emph{$j\ne i$ is a dictator for $\varphi$.}
        Since $\varphi(x)=\varphi(x')$ whenever $x_j=x'_j$,
        it is straightforward to construct $x,x' \in \R^2$ such that $x_i-x'_i > \zeta$ and $\varphi(x)=\varphi(x')$.
        But then we can construct $G_3$ and $G_4$ and derive a contradiction as in case~\ref{case:not-dict}.
    \end{enumerate}       
\end{proof}

For example, the maxmax, maxmin, and minimax regret behavioral models from Section~\ref{sec:elementary-examples} all have a dictatorial potential, and therefore fail to be other responsive.  This is easy to see, since changes in the utilities of the other players do not change a dictatorial potential's value at all.
In contrast, the max welfare behavioral model is other responsive, since changing just the utilities of the other agents can change the sum of utilities.  However, it is not dominance responsive.  No matter how dominant an action is, the sum of utilities for its associated outcomes can be entirely arbitrary, depending on the utilities of the other player.  The fair behavioral model is similarly other responsive (changing just the other agents' utilities can also change the differences), but not dominance responsive (since an action's being dominant for one player does not imply anything specific about the differences between agent utilities that the fair model must base its predictions upon).

The set of elementary behavioral models includes every strongly nonstrategic dominance responsive model.

\begin{theorem}
    \label{thm:DR-partition}
    A dominance responsive model is other responsive iff it is not elementary.
\end{theorem}
\begin{proof}
    \emph{Only-if direction: dominance responsive and not elementary implies not other responsive.}
    If a model is both dominance responsive and elementary, then by Theorem~\ref{thm:elementary-nonstrategic}, it is not other responsive.

    \emph{If direction: If a model is dominance responsive and not other responsive, then it is elementary.}
    Suppose that a behavioral model $f_i$ is dominance responsive, but not other responsive.
    Therefore, for every pair of games $G=(N,A,u)$ and $G'=(N,A,u')$ with $u_i(a) = u'_i(a)$ for all $a \in A$, $f_i(G) = f_i(G')$.
    We show how to represent $f_i$ as an elementary function by constructing appropriate $\varphi$, $\Phi$, and $h_i$ functions.
    Define $\varphi(x) = x_i$. Let $h_i(\Phi(G)) = f_i(z(\Phi(G)))$, where $z:\R^A \to \G$ is a function that returns a game with the utilities of $i$ set to its argument and the utilities of the other players set to 0.  Since differences in the other agents' utilities never change the output of $f_i$,
    $h_i(\Phi(G)) = f_i(z(\Phi(G))) = f_i(G)$ for all $G$.
\end{proof}


\section{Combinations of Elementary Models}
\label{sec:elementary-combos}
We now consider behavioral models that are constructed by combining the predictions of multiple elementary models.  We begin by demonstrating that the strong nonstrategic property is not preserved by convex combinations of elementary models; it is possible to construct a minimally strategic behavioral model by averaging the predictions of just two elementary models.

However, it is not possible to build arbitrary models by convex combinations of elementary models.
We demonstrate that such convex combinations cannot represent any of the exemplar strategic models of section~\ref{sec:analysis-existing}, and define ``weakly nonstrategic'' models to be those that lack this ability.  Finally, we demonstrate that convex combinations of elementary models augmented with a filtering preprocessing operation are also weakly nonstrategic in this sense.

\subsection{Convex Combinations of Elementary Models}
\label{sec:convex-elementary}
We call a model that returns a weighted average of the predictions of a set of elementary models a convex combination of elementary models.

\begin{definition}[Convex combination of elementary models]
    \label{def:convex-elementary}
    A behavioral model $g_i$ is a \emph{convex combination of elementary models} when there exists
    a set of $K$ elementary models $\fk$ and a set of $K$ weights $w_k \in [0,1]$ with $\sum_{k=1}^K w_k = 1$ such that for all games $G=(N,A,u)$ and actions $a_i \in A_i$,
    \[g_i(G)(a_i) = \sum_{k=1}^K w_kf_i^{(k)}(G)(a_i).\]
\end{definition}

Convex combinations do not automatically preserve the strong nonstrategic property.
As a simple example, the behavioral model
\[ g_i(G)(a_i) = \frac{1}{3} f_i^{\text{W}}(G)(a_i) + \frac{2}{3} f_i^{\text{maxmax}}(G)(a_i) \]
that averages the predictions of the max-welfare and maxmax behavioral models is both dominance responsive and other responsive. It will play any strictly dominant action with probability at least 2/3, but the 1/3 probability mass allocated by $f_i^{\text{W}}$ can differ in games that differ only in the other players' payoffs.

\subsection{Weakly Nonstrategic Models}
\label{sec:weakly-nonstrategic}
Although convex combinations of elementary models can be minimally strategic, they are not arbitrarily expressive.  In this section, we demonstrate that this class of model is \emph{weakly nonstrategic} in a specific sense. 

\begin{definition}
    \label{def:weakly-nonstrategic}
    A class $F$ of behavioral models is \emph{weakly nonstrategic} if it is not able to represent logit quantal best response to a dominance responsive behavioral model.
\end{definition}

This condition is sufficient to rule out the representation of any of the example strategic models in section~\ref{sec:analysis-existing}.  First, we show that this condition holds for convex combinations of elementary models.

\begin{theorem}
    \label{thm:cvx-combo-weakly-nonstrategic}
    For any $\lambda>0$ and dominance responsive profile of behavioral models $f_{-i}$ and any convex combination of elementary models $g_i$, there exists a game $G$ for which $g_i(G) \ne q_i(G)$, where
    \[q_i(G) = QBR_i(f_{-i}(G); \lambda, G).\]    
\end{theorem}
\begin{proof}
    We begin by observing that logit response satisfies a stronger condition than dominance responsiveness.
    For sufficiently large differences of utility, strictly dominant actions are played not just with greater probability than other actions, but in fact with probabilities that can be made arbitrarily close to 1.
    No convex combination that includes an other responsive elementary model can satisfy this condition.

    To see this, let $g_i = \sum_{k=1}^K f_i^{(k)}$ be a convex combination of elementary models, and let $f_i^{(l)}$ be other responsive with potential $\phi^{(l)}$.  Let $x,x'\in\R^2$ be two vectors such that $\phi^{(l)}(x) = \phi^{(l)}(x')$ and
    \[ x_i' - x_i > \frac{1}{\lambda}\log\left(\frac{1-w_l/2}{w_l/2}\right). \]
    Now consider games $G_5$ and $G_6$:
    \begin{center}
            \begin{game}{3}{2}[$G_5$]
                 \> $L$ \> $R$ \\
             $U$ \> $x'$  \> $x'$ \\
             $D$ \> $x$  \> $x$
            \end{game}%
            \hspace{1cm}%
            \begin{game}{3}{2}[$G_6$]
                 \> $L$ \> $R$ \\
             $U$ \> $x$  \> $x$ \\
             $D$ \> $x'$  \> $x'$
            \end{game}%
        \end{center}
        In these games, $q_i(G_5)(U) = q_i(G_6)(D) > 1 - w_l/2$, since
        \begin{align*}
            q_i(G_5)(U) &= \frac{\exp[\lambda u^{(5)}_i(U, f_{-i}(G_5))]}
                                {\exp[\lambda u^{(5)}_i(U, f_{-i}(G_5))] + \exp[\lambda u^{(5)}_i(D, f_{-i}(G_5))]} \\
                        &= \frac{1}{1+\exp[\lambda(u^{(5)}_i(D f_{-i}(G_5)) - u^{(5)}_i(U, f_{-i}(G_5)))]} \\
                        &> \frac{1}{1+\exp[-\lambda(1/\lambda)\log\left(\frac{1-(w_l/2)}{(w_l/2)}\right)]}\\
                        &= \frac{1}{1+\frac{(w_l/2)}{1-(w_l/2)}}\\
                        &= \frac{1}{\frac{1-(w_l/2)+(w_l/2)}{1-(w_l/2)}} \\
                        &= 1-w_l/2.
        \end{align*}
        Without loss of generality, assume that $f_i^{(l)}(G_5)(U) > 1/2$.
        Since $\Phi^{(l)}(G_5) = \Phi^{(l)}(G_6)$, it must be that $f_i^{(l)}(G_6)(U) > 1/2$ also.
        But then $f_i^{(l)}(G_6)(D) < 1/2$, and therefore $g_i(G_6)(D) < 1-w_l/2$.

        Of course, since we know from Lemma~\ref{lem:strategic-QBR} that $q_i$ is other responsive, $g_i$ also cannot represent $q_i$ without including at least one other responsive model, and we have our result.
\end{proof}

\subsection{Convex Combinations with Filtering}
\label{sec:convex-filtered-elementary}
In earlier work \citep{wright-ec,wright-jair}, we proposed a purportedly nonstrategic level-$0$ model that consisted of a convex combination of elementary behavioral models, with an additional \emph{informativeness filtering} step.  Models of this kind discard uniform predictions by sub-models as ``uninformative''; instead, for a given game, the model's prediction is a convex combination of the predictions of those sub-models that made an ``informative'', non-uniform prediction.  In this section, we formally define a generalization of these models, and demonstrate that models we proposed are indeed weakly nonstrategic in the sense of Definition~\ref{def:weakly-nonstrategic}.

\begin{definition}[Filtered convex combination of elementary models]
    A behavioral model $g_i$ is a \emph{filtered convex combination of elementary models} when there exist a set of $K$ elementary models $f_i^{(k)}$, $K$ weights $w_k \in [0,1]$ with $\sum_{k=1}^K w_k = 1$, and a \emph{filtering function} $h:\Delta^*\to\{0,1\}$ such that for all games $G=(N,A,u)$ and actions $a_i \in A_i$,
    \begin{equation}
        \label{eq:filtered-convex}
       g_i(G)(a_i) = \begin{cases}
                         1 / |A_i| &\text{if } h(f_i^{(k)}(G))=0 \quad\forall 1\le k\le K,\\
                         \frac{\sum_{k=1}^K h(f_i^{(k)}(G))w_kf_i^{(k)}(G)(a_i)}{\sum_{k=1}^K h(f_i^{(k)}(G))w_k}
                         &\text{otherwise.}
                     \end{cases} 
    \end{equation}
\end{definition}
Rather than returning a weighted average of the predictions of all its sub-models, a filtered convex combination of elementary models returns a weighted average of the predictions of the sub-models whose predictions are admitted by the filtering function (or a uniform distribution if none of the sub-models' predictions are admitted).  A natural choice of filtering function is to admit only non-uniform predictions, but this definition allows for any filter that is a function only of a sub-model's prediction.

The addition of a filtering function turns out to be insufficient to allow convex combinations of elementary models to represent our strategic behavioral models, when the elementary models in question have a potential that satisfies a stronger version of non-encoding.

\begin{definition}[Uniformly non-encoding function]
    A function $\varphi:\R^N \to \R$ for $N\ge2$ is \emph{uniformly non-encoding} iff for every $x \in \R^N$, there exists $x' \in \R^N$ such that
    \begin{enumerate}
        \item $\varphi(x) = \varphi(x')$, and
        \item $x_i - x'_i > b$.
    \end{enumerate}
\end{definition}
That is, not only does there exist a pair $x,x'$ that are arbitrarily far apart in a given dimension, but for \emph{every specific $x$} there exists an $x'$ that is arbitrarily far from it in a given dimension.

The elementary models proposed in \citet{wright-ec,wright-jair} all had linear potential functions.
It turns out that all non-dictatorial linear functions satisfy this condition.
\begin{proposition}
    Any linear function $\varphi(x) = w_0 + \sum_{j=1}^N w_jx_j$ is either dictatorial or non-encoding.
\end{proposition}
\begin{proof}
    If there are zero or one weights $w_j \ne 0$ with $1 \le j \ne N$, then the function is dictatorial and we are done.
    Otherwise, fix arbitrary $x \in \R^N$, $b,\epsilon>0$, and $1\le i,j \le N$ with $w_j \ne 0$ and $j \ne i$.
    Construct $x'\in\R^N$ by setting $x'_i = x_i + (1+\epsilon)b$, setting $x'_j = x_j - \frac{w_i}{w_j}(1+\epsilon)b$, and setting $x'_l = x_l$ for all $l \ne i$ with $1 \le l \le N$.
    Observe that $\varphi(x)=\varphi(x')$ and $x_i'-x_i > b$, as required.
    Since $x$ was arbitrary, this pair can be found for any $x$.
\end{proof}



\begin{theorem}
\label{thm:filtered-convex-universal-nonencoding-weakly-nonstrategic}
    Let $g_i = \sum_{k=1}^K w_kf_i^{(k)}$ be a filtered convex combination of elementary models, and let
    $q_i(G) = QBR_i(f_{-i}(G);\lambda,G)$ for any $\lambda>0$ and dominance responsive profile of behavioral models $f_{-i}$.  If each potential $\varphi^{(k)}$ is either dictatorial or uniformly non-encoding,
    then there exists a $2\times2$ game $G$ for which $g_i(G) \ne q_i(G)$; that is, $g_i$ is weakly nonstrategic.
\end{theorem}
\begin{proof}
    Let $f_i^(l)$ be an other responsive model in the convex combination.  If no such model exists, then $g_i$ is not other responsive and we are done.

    Fix an arbitrary game $G_7=(\{i,j\}, \{U,D\}\times\{L,R\}, u^{(7)})$ such that $h(f_i^{(l)}(G_7)) = 1$.
    There must be some such game for some other responsive model in the convex combination; otherwise, $g_i$ is not other responsive for $2\times2$ games, and we are done, since $q_i$ is other responsive for $2\times2$ games.

    Suppose without loss of generality that $f_i^{(l)}(G_7)(U) > 1/2$.
    We now use a construction similar to that of Theorem~\ref{thm:cvx-combo-weakly-nonstrategic} to build a game $G_8$ for which $q_i(G_8)(D) > 1 - w_l/2$, but $g_i(G_8)(D) < 1 - w_l/2$.

    Let $x = u^{(7)}(D,L)$ and $y = u^{(7)}(D,R)$. Choose vectors $x',y'$ 
    such that
    \begin{align*}
        \varphi(x') &= \varphi(x), \\
        \varphi(y') &= \varphi(y), \\
        x_i' - x_i  &> \frac{1}{\lambda}\log\left(\frac{1-w_l/2}{w_l/2}\right) + u_i(U,L),  
        y_i' - y_i  &> \frac{1}{\lambda}\log\left(\frac{1-w_l/2}{w_l/2}\right) + u_i(U,R).
    \end{align*}

    Now construct $G_8$ as follows:
        \begin{center}
            \begin{game}{3}{2}[$G_8$]
                 \> $L$ \> $R$ \\
             $U$ \> $u^{(7)}(U,L)$  \> $u^{(7)}(U,R)$ \\
             $D$ \> $x'$  \> $y'$
            \end{game}%
     \end{center}

     Observe that $u^{(8)}_i(D,f_{-i}(G_8)) - u^{(8)}_i(U,f_{-i}(G_8)) > \frac{1}{\lambda}\log\left(\frac{1-w_l/2}{w_l/2}\right)$ for any value of $f_{-i}(G_8)$,
     and thus $q_i(G_8)(D) > 1-w_l/2$.  But since $\varphi(x') = \varphi(x)$ and $\varphi(y') = \varphi(y)$,
     $f_i^{(l)}(G_8)(U) = f_i^{(l)}(G_7)(U) > 1/2$, and so $g_i(G_8)(D) < 1 - w_l/2$, giving our result.
\end{proof}

\begin{corollary}
    Any filtered convex combination of elementary models with linear potentials is weakly nonstrategic.
\end{corollary}

\section{Discussion and Future Work}
\label{sec:discussion}

In this work, we proposed elementary behavioral models (and their convex combinations) as mathematical characterizations of classes of nonstrategic decision rules.  These classes are constructively defined, in the sense that membership of a rule is verified by demonstrating how to represent the rule in a specific form---as a function of the output of a non-encoding potential map---rather than by proving that it cannot be represented as a response to probabilistic beliefs.
Indeed, many apparently nonstrategic rules \emph{can} be represented as responses to probabilistic beliefs; what is crucial to our notion of nonstrategic behavior is that they \emph{need not} be.

How to formally define nonstrategic behavior is a very tricky question, and we do not commit to a complete characterization.  Instead, we define two notions. First, strongly nonstrategic models are those that do not meet even the minimal requirements of dominance responsiveness and other responsiveness.  We claim that any reasonable definition of strategic behavior must imply both of these requirements, and hence any model that fails either requirement is unambiguously (``strongly'') nonstrategic.  Second, weakly nonstrategic classes of models are those that do not include any of our example notions of strategic models.  In the previous section, we describe some models classes that are weakly nonstrategic even though some of their members are minimally strategic.  A more complete characterization of nonstrategic behavior that classifies these ``in-between'' model classes is an important direction for future work.

It is interesting to note that various special cases of strategic solution concepts are nonstrategic under our definition.
For example, the equilibrium of a two-player zero-sum game can be computed by considering only the utility of a single agent, and hence the behavior for an equilibrium-playing player in such a game can be computed by an elementary behavioral model that computes the agent's maxmin strategy.\footnote{However, note that such a behavioral model would act in every game as though the game was zero sum; it is in this sense that we would still say that the model is nonstrategic. An analogous caveat applies to the other examples we give here.} Similarly, an equilibrium for a potential game can of course be computed in terms of outcome values computed by a potential function \citep{monderer96}.

One thing that these exceptions all have in common is that they are also \emph{computationally} easy, unlike general $\epsilon$-equilibrium, which is known to be hard in a precise computational sense \citep{daskalakis09,chen06}.
The equilibrium of a zero-sum game can be solved in polynomial time by a linear program; 
the equilibrium of a potential game can be found simply by finding the maximum of the potential function over all pure outcomes.\footnote{\label{fn:potential-games}
Note that although this procedure is tractable in terms of the normal form, congestion games (the most important representation of potential games) can often be represented in an asymptotically more compact form than the normal form; finding the equilibrium of such games is intractable relative to the compact representation \citep{fabrikant04,babichenko21}.}

However, the connection between ease of computation and strategic simplicity is not an equivalence.
An attractive future direction with potential applications in the design and analysis of multiagent environments is to shed further light on the connection between computational and strategic simplicity.

We also observe that our characterization of nonstrategic behavior in this paper is a binary distinction: in the view we have advanced, a behavioral model is either nonstrategic or it is not.
An intriguing question for future work is whether such a distinction can be made more quantitative: i.e., is there a sense in which agents are nonstrategic to a greater or lesser degree that is distinct from the number of steps of strategic reasoning that they perform?

\section*{Acknowledgments}
We are grateful to several anonymous reviewers and the associate editor for their invaluable comments and suggestions.  This work was funded in part by an NSERC E.W.R.\ Steacie Fellowship and an NSERC Discovery Grant.  Part of this work was done at Microsoft Research New York while the first author was a postdoctoral researcher and the second author was a visiting researcher.  Both authors held Canada CIFAR AI Chairs through the Alberta Machine Intelligence Institute during part of this work.

\clearpage
\bibliographystyle{apalike}
\bibliography{nonstrategic,lit}
\end{document}